\newtheorem{theorem}{Theorem}
\newtheorem{defi}{Definition}
\newtheorem{coro}{Corollary}
\newtheorem{lema}{Lemma}
\newtheorem{ejem}{Example}
\newtheorem{remark}{Remark}
\newtheorem{claim}{Claim}
\newcommand\Omit[1]{}
\newcommand\myred[1]{#1}
\newcommand\myblue[1]{#1}
\newcommand{\MA}{\mathcal{A}}
\newcommand{\MR}{\mathcal{R}}
\newcommand{\partes}[1]{{\mathcal P}({#1})}
\newenvironment{myproofof}[1]{\noindent \bf Proof of {#1}:
\setlength{\parskip}{4pt} \rm}{\ \null
\null \ \hfill \rule[-1mm]{1.2mm}{2mm}\rule[-2mm]{0mm}{4mm}\bigskip }
\newenvironment{myproof}{\noindent \bf Proof:
\setlength{\parskip}{2pt} \rm}{\ 
  \hfill \rule[-1mm]{1.2mm}{2mm}\rule[-2mm]{0mm}{4mm} }
\newcommand{\nc}{\newcommand}
\nc{\Set}[1]{\{ {#1} \}}
\nc{\ramon}[1]{{\color{blue}[{#1}]}}
\begin{document}


\title{
\bf Generalized binary utility functions and fair allocations\footnote{This work is an updated version of {\em Beyond identical utilities: buyer utility functions and fair allocations} appeared in arXiv in September 2021. This work was also published in {\em Math. Soc. Sci.} in 2023.}}

\author{Franklin Camacho$^1$, Rigoberto Fonseca-Delgado$^1$, Ramón {Pino Pérez}$^2$\\ and   Guido Tapia$^1$\\[2mm]
\em  $^1$ School of Mathematical and Computational Sciences, Yachay Tech University,\\ \em Urcuquí, Ecuador\\[1mm]
$^2$ \em Centre de Recherche en Informatique de Lens (CRIL), CNRS, Université d'Artois,\\ \em Lens, France\\[2mm]
Emails: \{fcamacho,rfonseca,guido.tapia\}@yachaytech.edu.ec, pinoperez@cril.fr
}

\maketitle

\begin{abstract}
The problem of finding envy-free  allocations of indivisible goods cannot always be solved; therefore, it is common to study some relaxations such as envy-free up to one good (EF1) 
 and envy-free up to any positively valued good (EFX). Another property of interest for {the} efficiency of an allocation is the Pareto Optimality (PO). Under additive utility functions \myred{for goods}, it is possible to find EF1 and PO {allocations} using the Nash social welfare. However,  finding an allocation that maximizes the Nash social welfare is a computationally costly problem.
Maximizing the utilitarian social welfare subject to EF1 constraints is an NP-complete problem for the case where three or more agents participate.
In this work, we propose {a restricted case of additive utility functions called \myred{generalized binary} utility functions.
The proposed utilities \myred{are a generalization  of binary and} identical  \myred{utilities simultaneously.}
In this scenario, we present} a polynomial-time algorithm {that} maximizes the utilitarian social welfare and, at the same time, produces an EF1 and PO allocation \myred{for goods as well as for chores}.  
Moreover, a slight modification of our algorithm gives a better allocation: one
which is EFX. 
\end{abstract}

\vspace{1mm}

\noindent {\bf Keywords:}
Allocation of indivisible goods\myred{/chores}, envy-free up to one good\myred{/chore},  efficiency, additive utility function, generalized binary utilities.


\maketitle
\noindent

\section{Introduction}
\label{sec:intro}
 The resource allocation problem has been widely studied in mathematics and economics for almost a century, 
 \cite{Ste48,Ste49,Nash-50,Aziz-Caragiannis2018,caragiannis2019,Camacho2020}.
The main elements in the problem are agents and resources. 
The goal  is to distribute (or allocate) the resources among the agents in a ``good manner.'' 
The agents can represent individuals, objects, government institutions, among others, depending on the application. The set of resources or goods to be distributed can be divisible or indivisible.  In general, these sets are considered finite. In this work, we study the problem of allocating indivisible resources among a group of agents, with the aim of  satisfying both the group and each individual in the best possible way.


This topic  has many applications, for instance in  solving divorce disputes, dividing an inheritance, sharing apartment rents, or even assigning household chores. In the last decade, there  has been a considerable interest in the computational aspects of this problem.  In particular, in Artificial Intelligence and more specifically in MultiAgent Systems, these problems are studied  and renamed  MultiAgent Resource Allocations (MARA) problems \cite{Aziz2016,CEM17}.

Finding a correct distribution of resources consists of distributing all the resources among the agents fairly and efficiently.
To establish efficiency and some criterion of fairness, it is necessary to consider the preferences that each agent has over resources.
In general, these preferences over resources are established through additive utility functions. 

Traditionally, fairness  is established through properties such as envy-freeness {or proportionality}.
However, there are situations where it is impossible to find allocations that meet any of these properties.  
Thus, other  weaker versions of fairness, such as  envy-free up to one  good \cite{Budish2011} or proportionality up to one good \cite{conitzer2017fair} are considered. 
Although there are results that, under certain conditions, guarantee the existence of allocations with some fairness property (see \cite{caragiannis2019}), finding them is a computationally complicated problem \cite{de-keijzer-2009}. { 
Just considering fairness may not be enough, because it could imply loss in group satisfaction.}

Efficiency, also known as Pareto efficiency or Pareto optimality, is related to the group satisfaction by an allocation. 
One way to find efficient allocations is through social welfare functions.
Caragianis et al, \cite{caragiannis2019},  showed that, under additive utility functions \myred{for goods}, it is always possible to find an allocation that is Pareto optimal and envy-free up to one good.
Actually, they prove that some of  the allocations that maximize the Nash {social} welfare are Pareto optimal and envy-free up to one good.
Unfortunately, finding   allocations which maximize the Nash welfare is also an NP-hard problem  \cite{Ramezani_Endriss_2010} (see also \cite{Aziz2016}).

Searching allocations which maximize the utilitarian social welfare is in ge\-neral a more tractable problem from the computational point of view and, because of that, commonly used. A well-known result is that, under additive utility functions, allocations that maximize this social welfare are Pareto optimal (see Theorem~\ref{lema-SWU-impli-OP}),
{ although the converse is not true. Moreover, allocations that maximize this social welfare do not always satisfy fairness properties. In Example~\ref{exp:new-and-very-simple},
 an allocation that is Pareto optimal and does not maximize utilitarian social welfare is proposed; besides, we find, in this example, that no allocation that maximizes utilitarian social welfare is envy free up to one good.}

When we consider additive utility functions,  
it is possible to  define a procedure that allows to find all the allocations which maximize utilitarian social welfare, see \cite{Camacho2020}. Moreover, finding these allocations is a computationally  tractable problem. 
Actually, in this work we propose a very simple algorithm in polynomial time, for scenarios of additive utilities, that finds an allocation which maximizes the utilitarian social welfare \myred{for goods as well as for chores}.

Unlike the allocations that maximize the Nash social welfare which are EF$1$, the allocations that maximize the utilitarian social welfare are not, in general, EF$1$. Moreover, there are additive scenarios in which the property EF$1$ fails for every allocation which maximizes the utilitarian social welfare (see Example~\ref{exp:new-and-very-simple}). Indeed, in this example we can see that  Nash and  utilitarian social welfares are independent.

{Finding an allocation that maximizes the utilitarian social welfare and satisfies the EF$1$ criteria is an NP-complete problem when the number of agents is greater than or equal to three (see \cite{aziz2021computing}).}
However, for some specialized scenarios, there are very simple algorithms in polynomial time which find EF$1$, and  even 
{EFX }
allocations. That is the case of identical utilities (see Barman et al. \cite{Barman2018}) \myred{and others like  bivaluated utilities (Ebadian et al. \cite{EPS22}) or  binary utilities \cite{Barman2018} (see Section~\ref{sec:related_work})}. 

In this work, we consider a class of additive functions, called \myred{generalized binary} utility functions. This class is  more general than the classes of \myred{binary and} identical utilities. 
\myblue{Intuitively, each resource has a market-price; each agent either does not want the resource at all, or wants it and values it by its market price.}
In the framework of  these utility functions, the following results are established:

\begin{enumerate}
    \item A characterization for Pareto optimal allocations (Theorem \ref{thm-main-1}).

    \item Each allocation that maximizes the Nash social welfare also maximizes the utilitarian social welfare (Theorem \ref{coro-main-thm}).

    \item Constructive proofs of the existence of  allocations which maximize the utilitarian social welfare, which are PO and respectively EF$1$ and EFX
    \myred{ for goods and for chores}. These allocations are obtained in  polynomial time (Theorems~\ref{thm:main} and~\ref{thm:main-efx}).
\end{enumerate} 

\noindent Moreover, we propose a basic algorithm in $O(nm)$ which finds, under additive utility functions,  an allocation that maximizes 
the utilitarian social welfare (Theorem~\ref{main-thm-6}).


The rest of this work is organized as follows. Section~\ref{sect-prelim} introduces the concepts and problems studied in this paper. Section~\ref{caracterization-of-PO} is devoted to a characterization of Pareto Optimality in our particular scenarios.
In Section~\ref{sect-methodology}, we propose a very simple and tractable  algorithm for maximizing the utilitarian social welfare and we study its justice properties in the case of \myred{generalized binary}  utilities.
In Section~\ref{sect-e-buyer}, we give a slight generalization of our \myred{generalized binary} scenarios and prove that most of the results obtained in the previous sections don't hold for this new class of scenarios.
\myred{Section~\ref{sec:related_work} contains a comparison of our work with other works using specialized scenarios for tackling the problems of fairness and efficiency. }
We conclude in Section~\ref{sect-conclusion} with some final remarks. The proofs\myred{, our detailed algorithm and some detailed examples} can be found in~\ref{proofs}.

\section{Preliminaries}\label{sect-prelim}

The set of agents is denoted by $\MA = \left\{1, \dots ,n \right\}$  and the set of resources  is denoted by $\MR = \left\{r_1,\dots,r_m \right\}$. So, $\vert \MR \vert=m$ and $\vert \MA \vert=n$.
An allocation of resources is a function  $F: \mathcal{\MR}\longrightarrow\mathcal{\MA}$. 
For each  agent $i$,  $F^{-1}(i)=\{r\in\MR: F(r)=i\}$ is the set of resources (or bundle) assigned to $i$. 
The set of all possible allocations is denoted by $\MA^{\MR}$. The number of possible allocations depends on $n$ and $m$, given that  
$\vert \MA^{\MR} \vert = \vert \MA \vert ^{\vert \MR \vert}=n^m$. 
The set of all subsets of $\MR$ is denoted by $\partes
\MR$. \myred{We consider two kinds of resources: {\em goods} and {\em chores}.
Goods are resources that agents are supposed to accept and chores are resources that agents are supposed to reject. We will assume that  either all resources are  goods or all resources are chores.
}

The preference of agents over resources is established through utility functions \myred{which are additive,  that is, functions of the type 
  $u:\partes
\MR\rightarrow \mathbb{R}$ which satisfy:
\begin{itemize}
\item $u(\emptyset)= 0$; 
\item $\forall S\in\partes \MR$ with $S\neq \emptyset$,  $u(S)=\displaystyle \sum_{s\in S}u(\{s\})$.
\end{itemize}
When the resources are goods, we always have $\forall S\in\partes \MR$, $u(S)\geq 0$.
When the resources are chores, we always have $\forall S\in\partes \MR$, $u(S)\leq 0$.
}
For each $i \in \MA$, $u_i$ denotes the additive  utility function associated to $i$.   
For simplicity, $u(\{s\})$ will be denoted by $u(s)$.  If every agent establishes an additive utility function in a problem of indivisible resource allocation, we say that it is an additive scenario.  

\begin{defi}\label{def-buyer-utility function}
\myred{Assume that for each $r_k\in\MR$ we have a mapping  $r_k\mapsto p_k$ where $p_k$ is a real number different from $0$.}
Let $u$ be  a function  $u:\partes
\MR\rightarrow \mathbb{R}$. We say that
 $u$ is  a \myred{generalized binary (g-binary for short)}  utility function if:
\begin{itemize}
    \item $u$ is an additive utility function;
    \item for every $ r_k\in\MR$, 
    $u(r_k)\in \{0,p_k\}$.
\end{itemize}
\end{defi}

We say that we are in a \myred{goods (chores) g-binary
} scenario \myred{when all the resources are goods (chores) and} every agent of the resource allocation problem   \myred{has a  utility function which is g-binary for the same mapping $r_k\mapsto p_k$.}

\myblue{Note that in case of goods (chores) $p_k$ can be viewed as the market-price (rejection intensity) of the resource $r_k$. The value 0 does not change an agent's utility; in case of goods, this value can be interpreted as a reject, whereas in chores, it can be interpreted as a preference.} 

Clearly, if $u$ is an additive utility function such that for each $r_k\in \MR$, $u(r_k)\in \{0,1\}$, then  $u$ is a \myred{g-binary} utility function. This type of  function is known as binary utility function, see \cite{amanatidis2020}.
Thus, the \myred{goods g-binary} scenarios are a generalization  of binary scenarios.

\myred{We assume that in goods and chores scenarios there is not a resource $r$ such that for every agent $i\in \MA$, $u_i(r)=0$. Therefore, in a goods g-binary scenario, $\forall r_j\in R$, $\max\{u_i(r_j): i\in A\}=p_{r_j}>0$ and $\min\{u_i(r):i\in A\}=0$. While, in a chores g-binary scenario, $\forall r_j\in R$, $\max\{u_i(r_j): i\in A\}=0$ and $\min\{u_i(r):i\in A\}=p_{r_j}<0$.}

A \myred{goods g-binary} scenario in which the value $0$ is not taken by the utility functions is called a scenario of identical utility functions. These scenarios were considered by Barman et al. \cite{Barman2018}.
{There are examples where imposing a scenario of identical utilities is not very adequate to find an allocation that best satisfies the  agents. The following example shows such a situation.

\begin{ejem}
Suppose there are two goods $r_1, r_2$  and two agents $1, 2$. Agent $1$ wants $r_1$ and \myred{his utility for this good is} $p_1$ , that is $u_1(r_1)=p_1$. Agent  $1$ does not want $r_2$ and \myred{his utility for this good is null}, that is $u_1(r_2)=0$. For agent $2$, it is all the contrary,  more precisely: $2$ wants $r_2$ and \myred{his utility for this good is} $p_2$, that is $u_2(r_2)=p_2$ but $2$ does not want $r_1$ and \myred{his utility for this good is null}, that is $u_2(r_1)=0$. In such a \myred{g-binary} scenario, it is clear that the best allocation corresponds to allocating $r_1$ to $1$ and $r_2$ to $2$. In a scenario with identical utilities the agents can't express that they are not interested in a good (utility $0$). Thus the allocation which 
gives $r_1$ to $2$ and $r_2$ to $1$, in a scenario of identical utilities produces the same social utilitarian welfare but this allocation is far from satisfying the agents.
\end{ejem}
}

Note that the \myred{goods g-binary} scenarios are a simplification of the Fisher market  model in economy
\cite{BS00}.

\subsection{Fairness, efficiency, and social welfare}\label{subsection:F-E-SW}
An attractive fairness criterion in additive scenarios is the absence of envy. If no agent strictly prefers the bundle assigned to another agent instead of its own bundle\myred{, the allocation is envy-free. More precisely, an allocation  $F$ is   {\bf envy-free}     (EF) if
 $\forall i, j \in \MA,\quad u_i(F^{-1}(i)) \geq u_i(F^{-1}(j))$. 
}
If there exists an agent $i\in \MA$ such that $u_i(F^{-1}(i)) < u_i(F^{-1}(j))$ for some $j\in\MA $, then the agent $i$ envies the agent $j$. The property EF is the most desirable property, but in a simple example of an indivisible resource with two agents it is impossible to find an allocation without envy.
In the literature \cite{Budish2011, caragiannis2019,amanatidis2020,EPS21,EPS22}, weaker versions of the envy free property can be found.
The weakest among these  is  envy free up to one good.  
The following definition establishes the main weak  envy free notions \myred{for goods and chores.\footnote{\myred{Actually, one can give a unified and more compact definition but it is clearer if we split it into two cases: goods and chores.}}}

\myred{
\begin{defi}\label{def-EF1}
Let $F$ be in $\MA^{\MR}$,  

\begin{enumerate}
    \item $F$ is an {\bf envy-free up to one resource}  (EF$1$) allocation for {\bf goods}  if $\forall i, j \in \MA$,  $ \exists g\in F^{-1}(j)$ such that
\begin{equation}
      u_i(F^{-1}(i))\geq u_i(F^{-1}(j)\setminus \{g\} )
\end{equation}
\item $F$ is an {\bf envy-free up to one resource}  (EF$1$) allocation for {\bf chores}  if $\forall i, j \in \MA$,  $ \exists g\in F^{-1}(i)$ such that
\begin{equation}
      u_i(F^{-1}(i)\setminus \{g\})\geq u_i(F^{-1}(j) )
\end{equation}

\item $F$ is an  {\bf envy-free up to any non zero valued resource} (EFX) allocation for {\bf goods} if $\forall i, j \in \MA,$
\begin{equation}
     \forall g\in F^{-1}(j)\,\mbox{with }
    u_i(g)>0,\quad u_i(F^{-1}(i))\geq u_i(F^{-1}(j)\setminus \{g\} )
\end{equation}

\item $F$ is an  {\bf envy-free up to any non zero valued resource} (EFX) allocation for {\bf chores} if $\forall i, j \in \MA,$
\begin{equation}
     \forall g\in F^{-1}(i)\,\mbox{with }
    u_i(g)<0,\quad u_i(F^{-1}(i)\setminus \{g\})\geq u_i(F^{-1}(j) )
\end{equation}
\item $F$ is an  {\bf envy-free up to any  valued resource  } (EFX$_0$) allocation for {\bf goods} if $\forall i, j \in \MA,$
\begin{equation}
     \forall g\in F^{-1}(j), \, 
     \quad u_i(F^{-1}(i))\geq u_i(F^{-1}(j)\setminus \{g\} )
\end{equation}

\item $F$ is an  {\bf envy-free up to any  valued resource  } (EFX$_0$) allocation for {\bf chores} if $\forall i, j \in \MA,$
\begin{equation}
     \forall g\in F^{-1}(i),\, 
     \quad u_i(F^{-1}(i)\setminus \{g\})\geq u_i(F^{-1}(j) )
\end{equation}

\end{enumerate}

\end{defi}
}

It is clear that in the case of additive utilities
\myred{for goods or for chores} we have:
\[
EF\Rightarrow EFX_0 \Rightarrow EFX \Rightarrow EF1
\]

The efficiency, also known as Pareto efficiency or Pareto optimality, aims at characterizing when the allocation best satisfies the group. \myred{Remember that if
   $F$ and $G$ are allocations in $\MA^{\MR}$, 
we say that $F$ is  {\em  Pareto dominated} by   $G$, when:
    \begin{itemize}
    \item $\forall i\in \MA,\, u_i(F^{-1}(i)) \leq u_i(G^{-1}(i))\,\, \mbox{and}$
\item $      \exists  j\in \MA\,\mbox{ such that }\, u_j(F^{-1}(j)) < u_j(G^{-1}(j))  
$
    \end{itemize}
We say that  $G$ is {\em Pareto optimal} (PO),  if it is not Pareto dominated by another allocation.
} 

One way to measure the social satisfaction of the agents is through the Nash and utilitarian social welfare functions. \myred{Let us recall their precise definitions.
 The utilitarian social welfare of  $F$, denoted by $SW_u(F)$, is defined by  
\begin{equation}
SW_u(F) = \sum_{i\in \mathcal{A}} u_i(F^{-1}(i))
\end{equation}
 we put $MSW_u=\{F: SW_u(F)\geq SW_u(G),\,\forall G\in \MA^{\MR}\}$.
 The  Nash social welfare, denoted by $SW_{Nash}(F)$, is  defined by 
\begin{equation}
    SW_{Nash}(F) = \prod_{i\in \MA} u_i(F^{-1}(i))
\end{equation}
we put 
$MSW_{Nash}=\{F: SW_{Nash}(F)\geq SW_{Nash}(G),\,\forall G\in \MA^{\MR}\}$.
}

\myred{
A well-known result is that any allocation that maximizes utilitarian social welfare is PO: 

\begin{theorem}\label{lema-SWU-impli-OP}
Under an additive scenario, let $F$ be  in $\MA^{\MR}$,   if $F\in MSW_u$, then $F$ is PO.
\end{theorem}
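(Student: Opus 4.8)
The plan is to argue by contraposition: I will show that if $F$ is \emph{not} Pareto optimal, then $F \notin MSW_U$.

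First I would unfold the definition of Pareto optimality. If $F$ is not PO, there exists an allocation $G \in \MA^{\MR}$ that Pareto dominates $F$; that is, $u_i(F^{-1}(i)) \le u_i(G^{-1}(i))$ for every $i \in \MA$, and $u_j(F^{-1}(j)) < u_j(G^{-1}(j))$ for at least one agent $j$. The second step is simply to sum these $n$ inequalities over all agents. Since every term on the left is at most the corresponding term on the right, and the $j$-th term is strictly smaller, the sum is strict as well:
\[
SW_U(F) = \sum_{i\in \MA} u_i(F^{-1}(i)) < \sum_{i\in \MA} u_i(G^{-1}(i)) = SW_U(G).
\]
Hence there is an allocation, namely $G$, with strictly larger utilitarian social welfare than $F$, so $F$ cannot belong to $MSW_U$. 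This is exactly the contrapositive of the claim.

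There is essentially no obstacle here: the only point that requires a word is that a single strict inequality survives the summation, which is immediate because all remaining $n-1$ comparisons are weak inequalities in the same direction. I would also note in passing that additivity of the utilities is not actually used in this direction (it is monotonicity of finite sums that does the work); additivity becomes relevant only for the fact that the converse fails, as witnessed by Example~\ref{exp:new-and-very-simple}.
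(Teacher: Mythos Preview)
Your proof is correct and matches the paper's own argument essentially verbatim: the paper argues by contraposition, takes a Pareto-dominating $G$, and sums the $n$ weak inequalities (one strict) to obtain $SW_U(F) < SW_U(G)$. Your parenthetical observation that additivity is not actually needed for this direction is accurate; the paper invokes additivity at that step but only the compatibility of strict/weak inequalities under summation is used.
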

}

\myred{Although in general it is not always possible to find EF allocations,} fortunately,  Caragiannis and colleagues \cite{caragiannis2019} showed that under an additive scenario \myred{for goods} it is possible to find EF1 and PO resource allocations. Actually, they prove the following  theorem:

\begin{theorem}\label{thm-Caragiannis}[Caragianis et al. \cite{caragiannis2019}]
Under an additive scenario \myred{for goods}, every allocation\footnote{Actually, the Theorem as stated is true when the maximum Nash social welfare is strictly positive. When it is zero, it is necessary to impose that the set of agents having a good is a maximal set (see \cite{caragiannis2019}).} that maximizes Nash social welfare is EF1 and PO.
\end{theorem}

\myred{
Note that, with the help of Theorem~\ref{thm-Caragiannis}, it is easy to see that the converse of Theorem~\ref{lema-SWU-impli-OP} is false. 
The following example shows this.

\begin{ejem}\label{exp:new-and-very-simple} 
Let $\MR = \{r_1,r_2\}$ be the set of resources and $\MA=\{1,2\}$  the set of agents where each agent  $i$ establishes its {utility function}  $u_i$ over every resource, according to Table \ref{tab:pref_of_agents}.
\begin{table}[H]
    \begin{center}
    \caption{ Utility functions.}
    \label{tab:pref_of_agents}
    \begin{tabular}{ccc}
          \hline
          &$r_1$&$r_2$ \\ \hline
          $u_1$&10&10 \\ 
          $u_2$&3&2 \\ \hline
    \end{tabular}
    \end{center}
\end{table}
       
Let $F$ and $G$ be the allocations defined by Table \ref{tab:def_of_allocation}.   
\begin{table}[H]
    \caption{Allocations.}
    \label{tab:def_of_allocation}
    \centering
    \begin{tabular}{ccc}
        \hline
          & $r_1$&$r_2$  \\
        \hline
         $F$&1&1 \\
         $G$&2&1 \\ \hline
    \end{tabular}
\end{table}
The utility by bundle received by each agent and social welfare are showed in Table \ref{tab:utility_bundle}.
\begin{table}[H]
    \caption{Utility by bundle received and social welfare.}
    \label{tab:utility_bundle}
    \centering
    \begin{tabular}{ccccc}
    \hline
      & 1&2&$SW_u$&$SW_{Nash}$  \\
    \hline
      $u_i(F^{-1}(i))$&20&0&20&0 \\
      $u_i(G^{-1}(i))$&10&3&13&30 \\ \hline 
    \end{tabular}
\end{table}
It is easy to see that $F$  is the only allocation which maximizes utilitarian social welfare. However, $F$ is not $EF1$ because  agent 2 envies agent 1, $u_{2}(F^{-1}(2)) = 0 < 5 = u_{2}(F^{-1}(1))$,
but even after  removing $r_1$ or $r_2$ from $(F^{-1}(1))$, envy does not disappear: 
\[
\begin{array}{l}
  u_{2}(F^{-1}(2)) = 0 < 2 = u_{2}(F^{-1}(1) \backslash \{ r_1 \}) \mbox{ and }      \\
  u_{2}(F^{-1}(2)) = 0 < 3 = u_{2}(F^{-1}(1) \backslash \{ r_2 \}) 
\end{array}\]
The maximum Nash social welfare is reached at 30, 
so $G$ maximizes the Nash social welfare. By  Theorem~\ref{thm-Caragiannis} $G$ is $EF1$, 
but it does not maximize the utilitarian social welfare.  In the following table,  we identify the properties that satisfy $F$ and $G$; if an allocation satisfies a property we use  \ding{51} and    \ding{53} otherwise.
\begin{table}[H]
    \caption{Identification of properties (within a non g-binary scenario).}
    \label{tab:identify_without}
    \centering
    \begin{tabular}{ccccc}
    \hline
         & PO&EF1&$MSW_u$&$MSW_{Nash}$  \\
    \hline
         $F$&\ding{51} &\ding{53}&\ding{51}&\ding{53}\\
         $G$&\ding{51}&\ding{51}&\ding{53}&\ding{51}\\
    \hline
    \end{tabular}
  \end{table}
\end{ejem}
}


\section{\myred{A characterization of Pareto Optimality}}\label{caracterization-of-PO}

 The following theorem shows that under \myred{g-binary scenarios}, the converse of  Theorem \ref{lema-SWU-impli-OP} is satisfied; i.e., having the property PO and belonging to $MSW_u$ are equivalent.

\begin{theorem}\label{thm-main-1}
Assume  a \myred{goods (chores) g-binary} scenario and let $F$ be an allocation in $\MA^ {\MR}$. Then, $F$ is PO if, and only if, $F\in MSW_u$.
\end{theorem}

Note that, in general additive scenarios, Theorem~\ref{thm-main-1}  does not hold as Example~\ref{exp:new-and-very-simple} reveals.

We have seen in Example~\ref{exp:new-and-very-simple} that  $MSW_{Nash}\not\subseteq MSW_u$. However,
under \myred{goods g-binary} scenarios, Theorems \ref{thm-Caragiannis} and \ref{thm-main-1} 
together, tell us that  $MSW_{Nash}\subseteq MSW_u$. This is important and will be stated in the following result.

\begin{theorem}\label{coro-main-thm}
Under a \myred{goods g-binary} scenario, each allocation that maximizes Nash social welfare also maximizes utilitarian social welfare.
\end{theorem}

\myred{A straightforward corollary of  Theorem~\ref{coro-main-thm} and Theorem~\ref{thm-Caragiannis}, is the existence, under goods g-binary scenarios, of allocations  maximizing the utilitarian social welfare and satisfying the EF1 property. The existence is based in finding an allocation producing a maximum Nash welfare, a hard problem from a computational point of view. In the next section we will see that in g-binary scenarios it is easy to compute allocations EF1 producing a maximal utilitarian welfare. }

\section{\myred{A simple Algorithm and its behavior  in some additive scenarios}}\label{sect-methodology}
\label{sec:BuildingTheAllocation}

\myred{In this section, we propose a very natural and simple algorithm  and we analyze its behavior in additive scenarios, in particular in g-binary scenarios. }

\myred{Let us start with a key result  in the conception of the algorithm. It} establishes that  an allocation distributes resources to the agents that maximize them if, and only if, this allocation maximizes the utilitarian social welfare. 
One can find a straightforward proof of this result\footnote{Actually, in \cite{Camacho2020} the {\em if} is proved. The {\em only if} is obvious. }, using a matrix approach, in \cite{Camacho2020}.
\begin{theorem}\label{thm:utili-imp-MSWU}

Under an additive scenario, $F\in MSW_u$ if, and only if, $ \forall i \in \MA, \forall r \in F^{-1}(i)$ we have that  $u_i(r)=\max\{u_j(r):j\in \MA\}$. 
\end{theorem}

\myred{Suppose that in an additive scenario, $\MR=\{r_1,\dots,r_m\}$ (all the resources are goods or all the resources are chores) and $\MA=\Set{1,\dots,n}$.  Let $\alpha_1,\dots,\alpha_m$ be the real numbers defined in the following way: for each $r_k\in\MR$, $\alpha_k=\max\{u_j(r_k): j\in\MA\}$.}
\myred{Then, by the previous theorem the following simple algorithm defines an allocation, $\Gamma$, having a maximal utilitarian welfare, that is, $\Gamma\in MSW_u$.}

\begin{algorithm}
\myred{
\DontPrintSemicolon 
\KwIn{Two finite sets, $\MR=\{r_1,\dots,r_m\}$ for the resources, $\MA=\Set{1,\dots,n}$ for the agents,
and their respective utilities $u_i$}
\KwOut{The allocation $\Gamma$ }
$v_0 \gets (0,\dots,0)$\;
\For{$k \gets 1$ \textbf{to} $m$} {
  $\alpha_k \gets \max\{u_j(r_k): j\in\MA\}$\;
  $P_k \gets \{j\in \MA: u_j(r_k)=\alpha_k\}$\;
  $l_{k} \gets \min \left\{ \vert [v_{k-1}]_{j} \vert : j\in P_k \right\}$\;
  $M_k \gets \left\{ i \in P_k : \vert [v_{k-1}]_{i} \vert = l_{k} \right\}$\;
  $j_k \gets \min\{ M_k \}$\;
  $\Gamma(r_k) \gets j_k$ \;
  $\left[ v_{k}\right]_{i} \gets
  \begin{cases}
    \left[v_{k-1} \right]_{i} + u_{j_k}(r_k),\,&\mbox{if}\,\,i=j_k\\
    \left [v_{k-1} \right]_{i}, \,&\mbox{if} \,\, \ i \not= j_k.
  \end{cases}$  for all $i\in \MA$\; 
}
\Return{$\Gamma$}\;
\caption{An allocation for maximal utilitarian welfare}
\label{algo:allocate}
}
\end{algorithm}

\myred{The idea of Algorithm \ref{algo:allocate} is very simple: the resource $r_k$ is allocated to an agent who maximizes its utility and such that 
before this step, that is
until the partial allocation of resources $\Set{r_1,\dots,r_{k-1}}$ is done, he has the minimal utility in case of goods and the maximal utility in the case of chores. A detailed description of the behavior of this algorithm appears in the Appendix.

From Theorem \ref{lema-SWU-impli-OP} we have the following result:
\begin{coro}\label{coro-Gamma is PO}
Under an additive scenario,  $\Gamma$ is  $PO$.
\end{coro}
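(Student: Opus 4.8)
The statement is an immediate consequence of the two results that precede it, so the proof is a one‑line chaining argument: first invoke Theorem~\ref{main-thm-6} to conclude that $\Gamma \in MSW_U$, i.e. that the allocation produced by the algorithm in Definition~\ref{defi-gamma} maximizes the utilitarian social welfare; then apply Theorem~\ref{lema-SWU-impli-OP}, which says that in an additive scenario any allocation in $MSW_U$ is Pareto optimal, to the particular allocation $F = \Gamma$. Combining the two gives that $\Gamma$ is PO, and nothing else is needed.

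\textbf{What is really going on (for completeness).} If one wants to see why the step $MSW_U \Rightarrow PO$ holds without citing Theorem~\ref{lema-SWU-impli-OP} as a black box, the argument is by contradiction: suppose $\Gamma$ were Pareto dominated by some $G \in \MA^{\MR}$. Then $u_i(\Gamma^{-1}(i)) \le u_i(G^{-1}(i))$ for every agent $i$, with strict inequality for at least one agent $j$. Summing over all agents $i \in \MA$ yields
\[
SW_U(\Gamma) = \sum_{i\in\MA} u_i(\Gamma^{-1}(i)) < \sum_{i\in\MA} u_i(G^{-1}(i)) = SW_U(G),
\]
contradicting $\Gamma \in MSW_U$. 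Hence no such $G$ exists and $\Gamma$ is PO.

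\textbf{Main obstacle.} There is essentially no obstacle in this corollary itself; all the work has been done in establishing Theorem~\ref{main-thm-6} (that the greedy assignment of each resource to a maximizing agent, with the tie‑breaking rule via $M_k$, indeed lands in $MSW_U$ and runs in $O(nm)$) and in Theorem~\ref{lema-SWU-impli-OP}. The only thing to be careful about is the standing assumption made just before Definition~\ref{defi-gamma} that every $\alpha_k$ is strictly positive; as the Remark explains, this is harmless, since resources with $\alpha_k = 0$ can be removed and reinserted arbitrarily without affecting membership in $MSW_U$, hence without affecting Pareto optimality. So the corollary holds for general additive scenarios as well.
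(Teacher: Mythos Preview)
Your proposal is correct and matches the paper's own treatment exactly: the paper derives the corollary in one line from Theorem~\ref{main-thm-6} and Theorem~\ref{lema-SWU-impli-OP}, just as you do. Your additional unpacking of the $MSW_U \Rightarrow PO$ step is accurate but goes beyond what the paper provides.
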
}

\myred{ We saw in 
Example~\ref{exp:new-and-very-simple} that the  allocation $F$, the unique allocation that maximizes the utilitarian social welfare is not EF1 (see Table~\ref{tab:identify_without}). But this allocation is indeed the allocation $\Gamma$, thus in general, Algorithm~\ref{algo:allocate} does not produce an allocation EF1. 

The following observation summarizes some facts in g-binary scenarios. This will be proved throughout Example~\ref{exp:properties PO EF1} in~\ref{proofs}.
}

\myred{
\begin{remark}\label{remark-new-behavior}
It is important to note  that in g-binary scenarios there are allocations EF1 which are not in $MSW_u$. In those scenarios there are allocations which are 
$PO$ and  $MSW_u$ but they are  neither EF1 nor belong to $MSW_{Nash}$.
 Also in those scenarios 
$\Gamma$ is EF1, PO, it  is in  $MSW_u$ but, in general, it is not in $MSW_{Nash}$.
It can happen that there exist allocations that are 
EF1, PO, and they are in both  $MSW_u$ and in  $MSW_{Nash}$.
\end{remark}
}

\myred{We have already said that Algorithm~\ref{algo:allocate} produces a maximal utilitarian welfare. This occurs  in case of additive scenarios and due to Theorem~\ref{thm:utili-imp-MSWU} and the definition of Algorithm~\ref{algo:allocate}. The next result summarizes this and gives the complexity of the algorithm.}

\begin{theorem}\label{main-thm-6}
Under a \myred{goods (chores)} additive scenario,  $\Gamma \in MSW_u$  and it is obtained in $O(nm)$ operations. 
\end{theorem}

\myred{In addition to this, we will show next that in g-binary scenarios
for goods and chores,
the allocation $\Gamma$, given by Algorithm~\ref{algo:allocate}
is also envy-free up to one resource (good or chore).

}

\myred{
\begin{theorem}\label{thm:main}
Under a g-binary scenario for goods or chores, the allocation $\Gamma$ given by Algorithm~\ref{algo:allocate} produces a maximal utilitarian welfare, is EF$1$ and PO. Its run time is $O(mn)$. 
\end{theorem}

As a matter of fact, in a g-binary scenario for  chores, if for any resource $r$, 
\[max_{i\in \MA}\Set{u_i(r)}\neq min_{i\in \MA}\Set{u_i(r)}\] 
we have that $\Gamma$ is EF. That is because, in such a case, $\Gamma$ is an allocation in which all the agents give utility 0 to their bundles.

 Moreover, with a very slight modification of Algorithm~\ref{algo:allocate}, we will obtain another allocation, $\Gamma^\ast$, which is EFX. This modification  consists in taking one  more step:  reordering the resources.   In the case of goods we order the resources in decreasing  order  and in the case of chores we order the resources in increasing order   according to the maximal absolute value of utilities given to resources. More precisely, 
  a resource $r$ precedes another resource $r'$ if 
$max_{i\in \MA}\Set{\lvert u_i(r)\rvert}\geq max_{i\in \MA}\Set{\lvert u_i(r')\rvert}$.
For this $\Gamma^*$ we have the following result.

\begin{theorem}\label{thm:main-efx}
Under a g-binary scenario for goods or chores, the allocation $\Gamma^*$ given by Algorithm~\ref{algo:allocate} modified as previously indicated maximizes the utilitarian social welfare, is  EFX and PO. Moreover, 
its run time is $O(m\log m + mn)$.
\end{theorem}

It is interesting to observe that the algorithm producing $\Gamma^*$ is quite similar to Barman et al. algorithm \cite{Barman2018} producing an allocation EFX, in the case of identical scenarios for goods. In fact, in identical scenarios, the notions EFX and EFX$_0$ coincide. This is not the case in g-binary scenarios. 
This can be viewed by building  a g-binary scenario,
in which $\Gamma^*$, the allocation given by the  modified Algorithm~\ref{algo:allocate}, which is EFX, is not EFX$_0$ (see Example~\ref{ej-EFX-not-EFXO} in \ref{proofs}).

Note that if $F$ is an   allocation $EFX$ for goods (for chores) such that for all $i,j\in\MA$ such that $u_i({F}^{-1}(i))<u_i({F}^{-1}(j))$ we have that   
${F}^{-1}(j)\subseteq \{r\in \MR: u_i(r)>0 \}$ (${F}^{-1}(i)\subseteq \{r\in \MR: u_i(r)<0 \}$), then $F$ is $EFX_0$ for goods (for chores).
}

\section{\myred{Some limits of Algorithm~\ref{algo:allocate}} }\label{sect-e-buyer}

\myred{
It is natural to ask if Algorithm~\ref{algo:allocate} has other interesting properties.
For instance, does it compute a maximal egalitarian welfare\myred{\footnote{\myred{The egalitarian social welfare of an allocation $F$, denoted by $SW_e(F)$ is defined by $SW_e(F)=min\Set{u_i(F^{-1}(i)): i\in\MA}$.}}} allocation? We will see in Example~\ref{no-egalitarian} that it is not the case even when we work in g-binary scenarios.

Another question concerns the properties of Algorithm~\ref{algo:allocate} in a slight modification of g-binary scenarios. Remember that in}
 g-binary scenarios, every Pareto efficient allocation indeed maximizes utilitarian social welfare (see Theorem \ref{thm-main-1}), and as a consequence, \myred{in the case of goods,} every assignment that maximizes Nash's social welfare also maximizes the utilitarian one (see Theorem \ref{thm:utili-imp-MSWU}). \myred{Thus,  natural questions are:  in these modified scenarios,} is it possible to preserve these properties? Furthermore, does \myred{ Algorithm~\ref{algo:allocate}} achieve fair allocations?

\myred{
The following example shows that neither $\Gamma$ nor $\Gamma^*$  maximize the egalitarian social welfare.
}

\myred{ 
\begin{ejem}\label{no-egalitarian}
There are three agents and three resources. The utilities are given in Table~\ref{ejem:noegalitarianutilities}. In Table~\ref{ejem:noegalitarianallocations}
are the allocations. In Table~\ref{ejem:noegalitarianswe} appear the utilities of each agent for every allocation and the egalitarian social welfare denoted by $SW_e$.

\begin{table}[H]
    \begin{center}
    \caption{ Utility functions.}\label{ejem:noegalitarianutilities}
        \begin{tabular}{cccc}
          \hline
          &$r_1$&$r_2$&$r_3$\\ \hline
          $u_1$&4&1&2 \\ 
          $u_2$&4&0&2 \\ 
          $u_3$&4&0&0\\ \hline
    \end{tabular}
    \end{center}
\end{table}
\begin{table}[H]
    \caption{Allocations.}\label{ejem:noegalitarianallocations}
        \centering
    \begin{tabular}{cccc}
        \hline
          & $r_1$&$r_2$&$r_3$  \\
        \hline
         $\Gamma$&1&1&2 \\
         $\Gamma^*$&1&1&2 \\ 
         $A$&3&1&2\\ \hline
    \end{tabular}
\end{table}
\begin{table}[H]
    \caption{Utility by bundle received and egalitarian social welfare.}\label{ejem:noegalitarianswe}
    \centering
    \begin{tabular}{ccccc}
    \hline
      & 1&2&3&$SW_{e}$  \\
    \hline
      $u_i(\Gamma^{-1}(i))$&5&2&0&0 \\
      $u_i({\Gamma^*}^{-1}(i))$&5&2&0&0 \\
      $u_i(A^{-1}(i))$&1&2&4&1 \\\hline 
    \end{tabular}
\end{table}
 Table~\ref{ejem:noegalitarianswe} shows that $\Gamma$ and $\Gamma^*$ do not maximize the egalitarian social welfare because the allocation $A$ has egalitarian social welfare which is better than the egalitarian social welfare of $\Gamma$ and $\Gamma^*$.
\end{ejem}
}

\myred{Now,} we propose a scenario called $\epsilon$-g-binary, which slightly changes the range of the utilities of each agent, generalizing the notion of g-binary scenarios. Let us see the definition below.

\begin{defi}\label{def-alpha-buyer-utility function} 
\myred{Assume that for each $r_k\in\MR$ we have a mapping  $r_k\mapsto p_k$ where $p_k$ is a real number.}
Let $u$ be  a function  $u:\partes
\MR\rightarrow \mathbb{R}$. We say that
 $u$ is  an $\epsilon$-g-binary  utility function \myred{for goods (chores)}  if:
\begin{itemize}
\item \myred{$\epsilon\geq 0 $ ($\epsilon\leq 0 $)};
    \item $u$ is an additive utility function;
    \item for every $ r_k\in\MR$, 
    $p_k>\epsilon$  
    \myred{($p_k<\epsilon$)} and  $u(r_k)\in \{\epsilon,p_k\}$.
\end{itemize}
{When every agent has an $\epsilon$-g-binary utility function \myred{for goods (chores) with the same mapping $r_k\mapsto p_k$}, we say that the scenario is \myred{goods (chores)} $\epsilon$-g-binary.}
\end{defi}

The above definition generalizes Definition \ref{def-buyer-utility function}; \myred{in the case goods (chores)}, $\epsilon$ is the minimum \myred{(maximum)} value for the scenario. 
It means that the minimum (maximum) value could be equal to zero or greater than \myred{(less than)}  zero. {When $\epsilon$ is equal to 0, we get exactly the \myred{goods (chores) g-binary} scenario.}

In the following example, {we give  an $\epsilon$-\myred{g-binary} scenario, with $\epsilon > 0$, for which there exists an allocation $G \notin MSW_u$ having a greater Nash social welfare than all the allocations that maximize the utilitarian social welfare.}

\begin{ejem} We consider 
$n=m=3$. Table \ref{tab:ejemplo-epsilon-buyer-Nash-notimplies-U} defines our $\epsilon$-\myred{g-binary} scenario  with $b > a > \epsilon > 0$.
\begin{table}[H]
     \caption{Utility functions.}
     \label{tab:ejemplo-epsilon-buyer-Nash-notimplies-U}
     \centering
    \begin{tabular}{cccccc}
          \hline
                 & $r_1$ & $r_2$    & $r_3$    \\ \hline
          $u_1$  & $b$     & $a$        & $a$        \\ 
          $u_2$  & $b$     & $\epsilon$ & $\epsilon$ \\ 
          $u_3$  & $b$     & $\epsilon$ & $\epsilon$ \\ \hline
    \end{tabular}
\end{table}
Note that any allocation that maximizes the utilitarian social welfare assigns the resources $r_2$ and $r_3$ to  agent $1$ (see Theorem \ref{thm:utili-imp-MSWU}). {Thus, there are only three allocations, $F_1, F_2$ and $F_3$, which maximize the utilitarian social welfare. They are shown in table \ref{tab:allUtilitarian}.  This table shows also another allocation, namely the allocation $G$ that assigns the resource $r_3$ to agent $3$, who gives it a utility of $\epsilon$.} 


\begin{table}[H]
     \caption{Definition of allocations.}
     \label{tab:Asignación-ordenado-epsilon-buyerNash-notimplies-U}
     \centering
    \begin{tabular}{cccccc}
    \hline
         & $r_1$ &$r_2$ & $r_3$  \\ \hline
    $F_1$ & 1     & 1    & 1 \\
    $F_2$ & 2     & 1    & 1 \\ 
    $F_3$ & 3     & 1    & 1 \\ 
    $G$   & 2     & 1    & 3 \\ \hline
    \end{tabular}
    \label{tab:allUtilitarian}
\end{table}

Table \ref{tab:UtilityEpsilonBuyer} describes the valuations that the agents give to the received bundle in the different allocations of Table \ref{tab:allUtilitarian}; also, it shows the Utilitarian and Nash social welfare of the allocations.

\begin{table}[H]
    \caption{Utility by received bundle and social welfare.}
    \label{utilidad-recivida-ordenado-epsilon-buyerNash-notimplies-U}
    \centering
    \begin{tabular}{ccccccccc}
      \hline
                             & $1$ & $2$ & $3$ & $SW_{u}$ & $SW_{Nash}$ \\ \hline
      $u_{i}({F_1}^{-1}(i))$ & $2a+b$ & $0$ & $0$ & $2a+b$ & $0$ \\ 
      $u_{i}({F_2}^{-1}(i))$ & $2a$ & $b$ & $0$ & $2a+b$ & $0$ \\ 
      $u_{i}({F_3}^{-1}(i))$ & $2a$ & $0$ & $b$ & $2a+b$ & $0$ \\ 
      $u_{i}({G}^{-1}(i))$ & $a$ & $b$ & $\epsilon$ & $a+b+\epsilon$ & $ab\epsilon$ \\ \hline
    \end{tabular}
    \label{tab:UtilityEpsilonBuyer}
\end{table}

Note that  allocation $G$ has better Nash social welfare than $F_1, F_2$ and $F_3$. {This is because the Nash social welfare of $F_1, F_2$ and $F_3$ is 0 and the Nash social welfare of $G$ is strictly positive. } Then $MSW_u \cap MSW_ {Nash} = \emptyset$. Consequently, there are PO allocations that are not contained in $MSW_u$; and therefore,   Theorem \ref{thm-main-1}, instantiated in these scenarios, does not hold.

{\myred{Note that in this example $\Gamma=\Gamma^*=F_1$. But,} regarding fairness, 
in allocations  $F_1$, $F_2$, and $F_3$, at least one agent does not receive any good, and at the same time, this agent values positively the two resources received by another agent. Then, it is not possible to eliminate the envy  removing only one resource.}
Thus, in this scenario, the allocations which maximize  utilitarian social welfare do not {necessarily} satisfy fairness.
In particular,   Algorithm~\ref{algo:allocate} does not attain success in finding an allocation $EF1$.
\end{ejem}

\section{\myred{Related work}}
\label{sec:related_work}

\myred{
Due to the fact that allocations producing a maximal utilitarian welfare are, in additives scenarios, Pareto optimal, in order to assure the Pareto optimality and EF1, it is natural to look for allocations which are in $MSW_{u}$ and are simultaneously  EF1.
However, in general additive scenarios, the problem of finding allocations producing a maximal utilitarian social welfare being simultaneously EF1 is NP-hard \cite{aziz2021computing}. Thus, 
several works adopt the strategy of working in scenarios  in which the utility functions are restricted. Next, we resume the main proposals studied in the literature, their results and the connections with our framework.

Actually, the work of Caragiannis et al. \cite{caragiannis2019}, under general goods  additive scenarios shows that some special allocations producing maximum Nash social welfare are EF$1$ and PO. However, finding this kind of allocations is in general an NP-complete problem. Moreover, this does not solve the problem of finding an allocation that produces a maximal utilitarian social welfare and that is also EF1, because the allocations in $MSW_{Nash}$ are not necessarily in $MSW_{u}$. However, in g-binary scenarios,
we have $MSW_{Nash}\subseteq MSW_{u}$, because in these scenarios being Pareto optimal is equivalent to producing a maximal utilitarian welfare.

In the same vein, Amanatidis et al. \cite{amanatidis2020}  
connect the maximum of Nash and EFX. 
They consider $k$-valued scenarios, that is, scenarios where there is a set of $k$ distinct possible values, all real numbers, that agents can assign to their utilities.
For 2-valued scenarios they prove that the allocations maximizing the Nash social welfare are EFX$_0$. They also prove that for $k$-valuated scenarios, with $k\geq 3$, it is not true that the allocations   maximizing the Nash social welfare are EFX$_0$.
To the best of our knowledge, there is no known simultaneous existence of EFX and  PO allocations under $k$-valued scenarios. Note that these scenarios are a generalization of $\epsilon$-g-binary scenarios.

Actually, the 2-valued scenarios, called also bivaluated,  are deeply studied by Ebadian et al. \cite{EPS22}. They prove that for the case of chores, in that scenario, there exist polynomial time algorithms for calculating allocations which are simultaneously EF1 and PO. They studied also other fairness properties  and other scenarios for which  they prove the existence of allocations in polynomial time which are fair and PO.
Note that the bivaluated scenario and the g-binary are independent.

A  particular case of a bivaluated scenario is a binary scenario, where the utility functions can take only the values zero or one. In those scenarios, Barman et al.~\cite{Barman2018} proposed an algorithm to find an allocation that satisfies EFX and PO constraints and runs in polynomial time. Actually, they propose an algorithm that finds an allocation maximizing the Nash social welfare in polynomial time in the case of binary utilities. In the case of identical utilities, they propose also an algorithm in polynomial time to compute an allocation EFX which maximizes the utilitarian social welfare. They show also that this allocation is a good approximation of a maximum of Nash social welfare.

In Figure~\ref{img:related_work}, we summarize the results of this discussion.
The left branch in the figure  goes from the most specialized scenario, the binary scenario, to the most general additive scenario passing through the bivaluated an $k$-valuated scenarios. The right branch  goes from identical 
scenario, the most specialized in this branch, to additive scenarios (the most general) passing through the new scenarios we study in this work:
g-binary and $\epsilon$-g-binary scenarios. 

We have noted that g-binary scenarios generalize simultaneously binary and identical scenarios and it is easy to see that $\epsilon$-g-binary scenarios generalize bivaluated scenarios.

}

\begin{figure}[H]
\begin{center}
    \input{diagram4}
\end{center}
\caption{The  boxes having a bold face framework represent the classes and results proposed in this work. The arrows show strict inclusion. Check marks $(\checkmark)$ denote the existence of fairness (EFX) and efficiency (PO) properties, while $(?)$ denotes an open question. }
\label{img:related_work}
\end{figure}

\section{Concluding remarks}\label{sect-conclusion}

In this work, we have studied fair and efficient allocations for indivisible resources \myred{(goods and chores)} when  the valuations are defined through g-binary utility functions. We showed that there exists an allocation that is envy-free up to one item (EF1) \myred{(for goods and chores)} and maximizes utilitarian social welfare ($MSW_u$) using g-binary utility functions. This result can  be slightly modified to produce an allocation that is envy-free up to any \myred{item valuated different from zero} (EFX) \myred{(for goods and chores)}.
In this framework 
a characterization of Pareto optimality is given.

Actually, we build  algorithms for finding  allocations which are  EF1,   EFX and  simultaneously $MSW_u$ \myred{(for goods and chores)}. They run in  polynomial-time. However, the algorithm finding an allocation EF1 is computationally slightly better than  the algorithm finding an allocation EFX.

{
The results of Section~\ref{sect-e-buyer} show, in some way, that the g-binary scenarios are a sort of limiting scenario in which it is easy to provide EF1 and PO  properties related to utilitarian social welfare. This is achieved  by considering when a resource   \myred{is indifferent to} an agent. However, a little bit beyond that, the results do not hold.}

\myred{It is known that finding allocations producing a maximum of Nash is NP-hard even in identical scenarios (see \cite{Barman2018}). Thus, in g-binary scenarios this problem will  necessarily be  NP-hard. Then, the natural question is if we can adapt the techniques of Barman et al. \cite{Barman2018} in order to prove that Algorithm~\ref{algo:allocate} produces a good approximation of a maximum of Nash social welfare in the case of  g-binary scenarios.} 

\section*{Acknowledgements}
{The first, second and fourth authors   thank the   Vice Chancellery  of Research and Innovation of Yachay Tech University which has partially funded this work through the  {\em Social Welfare and Justice in Decision Making} project code MATH22-08.}

The third author  has benefited from the support of the AI Chair BE4musIA of the French National Research Agency (ANR-20-CHIA-0028) and has also been partially funded by the program PAUSE of Collège de France.

\myred{Thanks to the referees whose observations have greatly helped to improve the results of this work and its presentation.}
We also thank  Professor Olga Porras for her careful proof reading.

\appendix

\section{\myred{Algorithm, proofs and examples}}\label{proofs}

A more detailed explanation of Algorithm~\ref{algo:allocate} is as follows:

\begin{enumerate}
    \item Let $v_0 =(0,\dots,0)$  be the initial vector of partial valuations, with size $n$.
\item  \label{step-2} For $k=1$ to $m$:     
\begin{enumerate}
\item We take $r_k\in \MR$.
\item {$\alpha_k=\max\{u_j(r_k): j\in\MA\}$.}
\item  Let $P_k$ and $M_k$ be the sets  given by:
\begin{equation}\label{eq-P_k}
    P_k=\{j\in \MA: u_j(r_k)=\alpha_k\}
\end{equation} and
\[l_{k} = \min \left\{ \myred{\lvert[v_{k-1}]_{j}\rvert} : j\in P_k \right\}\]
\begin{equation}\label{eq-def-M_k}
M_k=\left\{ i \in P_k : \myred{\lvert[v_{k-1}]_{i}\rvert }= l_{k} \right\}
    \end{equation}
    where $[v_{k-1}]_{i}$ is the position $i$ of $v_{k-1}$.
\item Let $j_k$ be the minimum of $M_k$.
\item Allocation of $ r_k $:  \begin{equation}\label{eq-defi-alocation-F}
    \Gamma(r_k)=j_k
\end{equation}
\item Updating  the vector $v_k$ of partial utilities, for all $i\in \MA$:
$$\left[ v_{k}\right]_{i} =
\begin{cases}
\left[v_{k-1} \right]_{i} + u_{j_k}(r_k),\,&\mbox{if}\,\,i=j_k\\
\left [v_{k-1} \right]_{i}, \,&\mbox{if} \,\, \ i \not= j_k.
\end{cases}
$$
 \end{enumerate}
\end{enumerate}

Notice that  $P_k\neq \emptyset$. Therefore,  $M_k\neq \emptyset $ and $M_k\subseteq \mathbb{N}$; so, there exists a minimum for $M_k$.
On the other hand, $\forall i\in \MA$, $u_i$ is an additive utility function, so  each position  $i$ of $v_k$,   $\left[v_k \right]_{i}$, is the partial valuation given by agent $i$ to its assigned bundle up to step $k$.

\bigskip

\noindent{\bf Theorem 3.}
{\em Assume  a \myred{goods (chores) g-binary} scenario and let $F$ be an allocation in $\MA^ {\MR}$. Then, $F$ is PO if, and only if, $F\in MSW_u$.
}

\medskip

\begin{myproof}
\myred{({\em only if} part)} Let  $F\in \MA^{\MR}$, and we suppose that $F$ does not maximize utilitarian social welfare. We want to show that $F$ is not Pareto optimal.

Since all agents consider g-binary utility functions and $F$ does not maximize utilitarian social welfare, by Theorem \ref{thm:utili-imp-MSWU}, there exist $r^*\in \MR$ and $j\in\MA$ such that $\myred{u_j(r^*)=\max\{0,p^*\}}$ and
$\myred{u_i(r^*)=\min\{0,p^*\}}$ with $i=F(r^*)$
and $\myred{p^*\neq 0}$.
Let  $G\in \MA^{\MR}$ be given by $$G(r)=\begin{cases}
F(r),\,\, \mbox{if} \,\, r\neq r^*\\
j,\,\, \mbox{if} \,\, r=r^*
\end{cases}$$
\myred{Note that $r^*\notin F^{-1}(j)$,
$G^{-1}(j) =F^{-1}(j) \cup \{r^*\}$, 
$G^{-1}(i) =F^{-1}(i) \setminus \{r^*\}$, and for all $k\in \MA$ with  $k\notin \{i,j\}$,
$F^{-1}(k)=G^{-1}(k)$.

In a good g-binary scenario, $u_j(r^*)=p*>0$ and $u_i(r^*)=0$. So, $u_j(F^{-1}(j))<u_j(G^{-1}(j))$ and $u_k(F^{-1}(k))=u_k(G^{-1}(k))$ for $k\neq j$. Hence,
$F$ is Pareto dominated by  $G$.

Now,  in a chore g-binary scenario, $u_j(r^*)=0$ and $u_i(r^*)=p^*<0$. Thus, $u_i(G^{-1}(i))=u_i(F^{-1}(i))-u_i(r^*)>u_i(F^{-1}(i))$ and $u_k(F^{-1}(k)=u_k(G^{-1}(k))$ for $k\neq j$. Therefore, $G$  dominates  $F$. 

In both cases  $F$ is not PO.

({\em  if} part) Follows from Theorem~\ref{lema-SWU-impli-OP}.
}
\end{myproof}

\medskip

The following observation is very useful in the proofs. Its proof is obvious by equations  \eqref{eq-defi-alocation-F},  \eqref{eq-P_k} and \eqref{eq-def-M_k}.

\begin{remark}\label{lemma-properties1-F} 
We assume an additive scenario. Let  $\Gamma$ be
the allocation of Algorithm~\ref{algo:allocate}.
For every $r_k\in\MR$ and every $j\in \MA$ such that   $\Gamma(r_k)=j$, we have   
\begin{enumerate}
    \item \label{i}   $\forall i(i\in P_k\implies    \myred{ \lvert\left[ v_{k-1} \right]_j\rvert \leq \lvert \left[v_{k-1} \right]_i \rvert)}$ 
    \item\label{ii} $u_j(r_k)=\alpha_k=max\{u_i(r_k): \forall i\in \MA\}$
\end{enumerate}
\end{remark}

\noindent{\bf Theorem 6.}
{\em 
Under a \myred{goods (chores)} additive scenario,  $\Gamma \in MSW_u$  and it is obtained in $O(nm)$ operations. }

\medskip

\begin{myproof}
Let $\Gamma$ be the allocation of Algorithm~\ref{algo:allocate}.  By 
Remark~\ref{lemma-properties1-F}, part  \ref{ii},  
and  Theorem~\ref{thm:utili-imp-MSWU}, we have that $\Gamma\in MSW_u$. 
The proposed algorithm starts initializing the vector of partial utilities $v_0$ with zeros, which has one position by each one of the $n$ agents. 
This step demands $O(n)$ operations. 
Step-\ref{step-2} allocates the $m$ resources finding the agent with minimum partial utility in each iteration, following \eqref{eq-def-M_k}. 
Allocating $m$ resources, finding the minimum of $n$ agents demands $O(nm)$ operations.
In the next steps, the agent receives the resource in \eqref{eq-defi-alocation-F}; for all the resources, this step and the updating of the vector of partial utilities runs in $O(m)$.
Building the resource allocation runs in $O(max(n, nm, m))$, thus, we have that the proposed algorithm is  in $O(nm)$.
\end{myproof}

\bigskip

\myred{
In order to prove Theorem~\ref{thm:main},  we  establish a technical lemma for \myred{goods or chores g-binary scenario}. To establish this lemma we need the following definitions:

\begin{align*}
    A_{ij} &= \{r \ | \ u_i(r)=\max\{0,p_r\} \, \wedge \, u_j(r)=\min\{0,p_r\}\}, \\
    B_{ij} &= \{r \ | \ u_i(r)=\min\{0,p_r\} \, \wedge \, u_j(r)=\max\{0,p_r\}\}, \\
    C_{ij} &= \{r \ | \ u_i(r)=\max\{0,p_r\}=u_j(r)\}, \quad
    \\
    D_{ij} &= \{r \ | \ u_i(r)=\min\{0,p_r\}=u_j(r)\},\\
    R_{*} &= \{r \ | \ u_i(r)=\min\{0,p_r\}, \ \forall i \in A\}.
\end{align*}

Note that for every $i,j$ we have
\begin{equation}\label{partition}
R=A_{ij}\cup B_{ij} \cup C_{ij} \cup D_{ij} 
\end{equation}
and this is a partition of $R$.

\begin{lema}\label{technicallemma}
Under a g-binary  scenario for goods or chores, if $\Gamma$ is computed via Algorithm \ref{algo:allocate}, then for all $i,j \in A$ we have that
\begin{equation}\label{eq-lema-1}
    u_{i}\left(\Gamma^{-1}(i)\right)=\left\{\begin{array}{ll}
u_{i}\left(\Gamma^{-1}(i) \cap A_{i j}\right) + u_{i}\left(\Gamma^{-1}(i) \cap C_{i j}\right) &\quad \text{in  goods g-binary scenario} \\
u_{i}\left(\Gamma^{-1}(i) \cap R_{*}\right)  &\quad \text{in  chores g-binary scenario}
\end{array}\right.
\end{equation}
and,
\begin{equation}\label{eq-lema-2}
    u_{i}\left(\Gamma^{-1}(j)\right)=\left\{\begin{array}{ll}
 u_{i}\left(\Gamma^{-1}(j) \cap C_{i j}\right) &\quad \text{in  goods g-binary scenario} \\
u_{i}\left(\Gamma^{-1}(j) \cap B_{ij}\right) + u_{i}\left(\Gamma^{-1}(j) \cap R_{*}\right)  &\quad \text{in chores g-binary scenario}
\end{array}\right.
\end{equation}

\end{lema}
\begin{myproof}
For each $r \in R$, we consider $M_r = \{k \in A: u_k(r)=\max\{0, p_r\}\}$ and $ m_r = \{k \in A: u_k(r)=\min\{0, p_r\}\}$.
Note that, $M_r \cap m_r = \emptyset$ and $M_r \cup m_r = A$. Moreover, by definition of   $\Gamma$, if $M_r\neq \emptyset$, then $\Gamma(r)\in M_r$.
\begin{claim}\label{claim-technical-Lemma} For all  $i,j \in A$,
\begin{enumerate}
    \item $A_{ij} \cap \Gamma^{-1}(j) = \emptyset=B_{ij} \cap \Gamma^{-1}(i) $.\\
    In fact,  if  $r \in A_{ij} \cap \Gamma^{-1}(j)$, then $\Gamma(r) = j \in M_r$ and $j \in m_r$ which is a contradiction. So, $A_{ij} \cap \Gamma^{-1}(j) = \emptyset$. As $B_{ij}=A_{ji}$, then $B_{ij} \cap \Gamma^{-1}(i)=\emptyset$.
    \item $D_{ij}\cap \Gamma^{-1}(i)= R_*\cap\Gamma^{-1}(i)$.\\
    Suppose that  $R_*=\emptyset$.
    Then $M_r\neq \emptyset$ for all $r$. If $r\in D_{ij}\cap \Gamma^{-1}(i) $, then $\Gamma(r) = i \in M_r$ and $i \in m_r$ which is a contradiction. So, $D_{ij}\cap \Gamma^{-1}(i)=\emptyset=R_*\cap\Gamma^{-1}(i)$. Now, suppose that  $R_*\neq \emptyset$, 
    if $r\in R_*$, then $m_r\neq \emptyset$ and, by definition of $\Gamma $, $\Gamma(r)\in m_r$. Given  $r\in D_{ij}\cap\Gamma^{-1}(i)$, then $u_i(r)=\min\{0,p_r\}=u_{j}(r)$ and $\Gamma(r)=i$. Therefore, for all $k\in A$, $u_k(r)=\min\{o,p_r\}$; otherwise, $\Gamma(r)\notin m_r$. Thus, $r\in R_*\cap\Gamma^{-1}(i) $. But, $R_*\subseteq D_{ij}$. Thus, $D_{ij}\cap \Gamma^{-1}(i)= R_*\cap\Gamma^{-1}(i).$
   \item \label{claim1-item-3}$\Gamma^{-1}(i)=(A_{ij}\cap \Gamma^{-1}(i))\cup (C_{ij}\cap \Gamma^{-1}(i))\cup (R_*\cap \Gamma^{-1}(i))$.\\
   This is due to parts 1 and 2 of Claim\ref{claim-technical-Lemma} and Equation~\ref{partition}. 
\end{enumerate}
\end{claim}
Now we prove the lemma. By Claim\ref{claim-technical-Lemma} part \ref{claim1-item-3} and the fact that $B_{ij}=A_{ji}$ and $C_{ij}=C_{ji}$, we have
\begin{eqnarray*}
    u_i(\Gamma(i))=u_i(A_{ij}\cap \Gamma^{-1}(i)) + u_i(C_{ij}\cap \Gamma^{-1}(i)) + u_i(R_*\cap \Gamma^{-1}(i))\\
     u_i(\Gamma(j))=u_i(B_{ij}\cap \Gamma^{-1}(j)) + u_i(C_{ij}\cap \Gamma^{-1}(j)) + u_i(R_*\cap \Gamma^{-1}(j))
\end{eqnarray*}
If the scenario is a chore g-binary scenario, then
$u_i(A_{ij}\cap \Gamma^{-1}(i))=u_i(C_{ij}\cap \Gamma^{-1}(i))=u_i(C_{ij}\cap \Gamma^{-1}(j))=0$. While in a good g-binary scenario, $u_i(R_*\cap \Gamma^{-1}(i))=u_i(R_*\cap \Gamma^{-1}(j))=u_i(B_{ij}\cap \Gamma^{-1}(j))=0$.
Thus, the equations \eqref{eq-lema-1} and \eqref{eq-lema-2} are true.
\end{myproof}

}

\bigskip

\noindent{\bf Theorem 7.} {\em 
Under a g-binary scenario for goods or chores, the allocation $\Gamma$ given by Algorithm~\ref{algo:allocate} produces a maximal utilitarian welfare, is EF$1$ and PO. Its run time is $O(mn)$. 
}

\medskip

    \begin{myproof} \myred{We prove first the case of goods.}
We suppose that $\forall i\in \MA$, $u_i$ is a \myred{g-binary} utility function.
We consider the allocation $\Gamma$ of Algorithm~\ref{algo:allocate}. By Theorem \ref{main-thm-6}, $\Gamma$ maximizes utilitarian social welfare in $O(mn)$ operations, and by Theorem \ref{thm-main-1}, $\Gamma$ is PO.
We want to show that $\Gamma$ is EF1. Suppose that there exists   $i \in A$ such that
    \begin{equation}\label{eq4}
        u_{i}(\Gamma^{-1}(i)) < u_{i}(\Gamma^{-1}(j))
    \end{equation}
for some agent $j$. As $u_i$ is a \myred{g-binary} utility, by Lemma \ref{technicallemma}, it is enough to show that there exists $r \in \Gamma^{-1}(j) \cap C_{ij}$ such that
$u_{i}(\Gamma^{-1}(i)) \ge u_{i}(\Gamma^{-1}(j)) - u_{i}(r)$.
Clearly,  $\Gamma^{-1}(j) \cap C_{ij} \not= \emptyset$. Otherwise, by \eqref{eq-lema-2} and \eqref{eq4}, we have $u_{i}(\Gamma^{-1}(j)) < 0$, which is a contradiction because $u_i$ is non negative.
Let  $\Gamma^{-1}(j) \cap C_{ij} = \left\{r_{s_1},\dots,{r}_{s_{k}} \right\}$,
where
$r_{s_k}$ is the last resource assigned to $j$ and  it is  preferred by $i$ and $j$.
Since   $\Gamma(r_{s_k})=j$ and  $i\in P_{s_k}$,
by Remark~\ref{lemma-properties1-F}, part \ref{i},
\begin{equation*}\label{eq6}
     u_{i}(\Gamma^{-1}(i) \cap \left\{r_1,\dots,r_{s_{k}-1} \right\})
    \ge u_{j}(\Gamma^{-1}(j) \cap \left\{r_1,\dots,r_{s_{k}-1} \right\}).
\end{equation*}
By additivity of  $u_j$, 
    \begin{equation*}\label{eq7}
    u_{j}(\Gamma^{-1}(j) \cap \left\{r_1,\dots,r_{s_{k}-1} \right\} ) \ge u_{j}(\Gamma^{-1}(j) \cap \left\{r_1,\dots,r_{s_{k}-1} \right\} \cap C_{ij}).
\end{equation*}
Using the transitivity of  $\geq$, we have
    \begin{equation}\label{eq8}
    u_{i}(\Gamma^{-1}(i) \cap \left\{r_1,\dots,r_{s_{k}-1} \right\}) \ge u_{j}(\Gamma^{-1}(j) \cap \left\{r_1,\dots,r_{s_{k}-1} \right\} \cap C_{ij}).
    \end{equation}
As $u_i$ and $u_j$ are \myred{utility functions in a g-binary scenario}, then the utility that agents $ i $ and $ j $ assign to each resource in $ C_ {ij} $ is equal, so
    \begin{equation}\label{eq9}
        u_{j}(\Gamma^{-1}(j) \cap \left\{r_1,\dots,r_{s_{k}-1} \right\} \cap C_{ij}) = u_{i}(\Gamma^{-1}(j) \cap \left\{r_1,\dots,r_{s_{k}-1} \right\} \cap C_{ij}).
    \end{equation}
By equations \eqref{eq8}  and \eqref{eq9},
\begin{equation}\label{eq10}
   u_{i}(\Gamma^{-1}(i) \cap \left\{r_1,\dots,r_{s_{k}-1} \right\})\geq u_{i}(\Gamma^{-1}(j) \cap \left\{r_1,\dots,r_{s_{k}-1} \right\} \cap C_{ij}).
\end{equation}
Now,   by equation \eqref{eq-lema-2} and additivity of $u_i$,
    \begin{align*}
        u_{i}(\Gamma^{-1}(j)) &= u_{i}(\Gamma^{-1}(j) \cap C_{ij}) \\
                         &= u_{i}(\Gamma^{-1}(j) \cap \left\{r_1,\dots,r_{s_{k}-1} \right\} \cap C_{ij}) + u_{i}(r_{s_k})
    \end{align*}
    and, by equation \eqref{eq10},
\begin{equation*}
     u_{i}(\Gamma^{-1}(j))\leq u_{i}(\Gamma^{-1}(i) \cap \left\{r_1,\dots,r_{s_{k}-1} \right\})+ u_{i}(r_{s_k});
\end{equation*}
  so,
    $$u_{i}(\Gamma^{-1}(i) \cap \left\{r_1,\dots,r_{s_{k}-1} \right\}) \ge u_{i}(\Gamma^{-1}(j)) - u_{i}(r_{s_k}).$$
But, $u_{i}(\Gamma^{-1}(i)) \ge u_{i}(\Gamma^{-1}(i) \cap \left\{r_1,\dots,r_{s_{k}-1} \right\})$,
and using  transitivity,
\begin{equation}\label{eq-thm-main}
  u_{i}(\Gamma^{-1}(i)) \geq u_{i}(\Gamma^{-1}(j))- u_{i}(r_{s_k}).
\end{equation}
Therefore,  $\Gamma$ is  EF1.

\myred{Now we give the proof in the case of chores.
By Lemma \ref{technicallemma}, for all $i,j\in A$, \begin{equation}\label{eq1-thm}
 u_i(\Gamma^{-1}(i))=u_i(\Gamma^{-1}(i)\cap R_*)   
\end{equation}
and \begin{equation}\label{eq2-thm}
    u_i(\Gamma^{-1}(j))=u_{i}\left(\Gamma^{-1}(j) \cap B_{ij}\right) + u_{i}\left(\Gamma^{-1}(j) \cap R_{*}\right).
\end{equation}

If $R_*=\emptyset$, then   $u_i(\Gamma^{-1}(i))=0$ and $u_i(\Gamma^{-1}(j))=u_{i}\left(\Gamma^{-1}(j) \cap B_{ij}\right)\leq 0$. Thus, $u_i(\Gamma^{-1}(i))\geq u_i(\Gamma^{-1}(j))$. So, $\Gamma$ is EF. In consequence, $\Gamma $ is EF1.

 Suppose that  $R_*\neq \emptyset$ and that there exist $i,j\in A$ such that \begin{equation}\label{eq3-thm}
     u_i(\Gamma^{-1}(i))<u_i(\Gamma^{-1}(j)).
 \end{equation}
We want to show that there is $r\in \Gamma^{-1}(i)$ such that \begin{equation}\label{eq4-thm}
    u_i(\Gamma^{-1}(i)\setminus\{r\})\geq u_i(\Gamma^{-1}(j)
\end{equation}

 By equation  \eqref{eq3-thm},   $u_i(\Gamma^{-1}(i))<0$; and by equation  \eqref{eq1-thm}, $\Gamma^{-1}(i)\cap R_*\neq \emptyset $. 
 Using additivity   of $u_i$ and equations \eqref{eq1-thm} and \eqref{eq2-thm}, to show that \eqref{eq4-thm} is true, it is enough to find  $r\in\Gamma^{-1}(i)\cap R_*$ such that
 \begin{equation}\label{eq5-thm}
 u_i((\Gamma^{-1}(i)\cap R_*)\setminus\{r\})\geq u_{i}\left(\Gamma^{-1}(j)\right).
 \end{equation}
 Let  $r_{s}\in \Gamma^{-1}(i)\cap R_*$ such that  $r_{s}$ is the last chore allocated to $i$ and it is minimized  for all $l\in A$. 
Consider $\{r_1,\dots,r_{s-1}\}$\footnote{Note that if $r_{s}=r_1$, then $\{r_1,\dots, r_{s-1}\}=\emptyset$} the set of all chores allocated before $r_s$. As
$$\{r_1,\dots,r_{s-1}\}\cap \Gamma^{-1}(l)\cap R_*\subseteq \Gamma^{-1}(l)$$ then
\begin{equation}\label{eqxx-thm}
    u_i(R_*\cap \{r_1,\dots,r_{s-1}\}\cap \Gamma^{-1}(l) )\geq  u_i(\Gamma^{-1}(l)).
\end{equation}
On the other hand, as $r_s$ is the last chore assigned to $i$ and $r_s\in R_*$, then 
$$u_i(R_*\cap \Gamma^{-1}(i)\setminus\{r_s\})= u_i(R_*\cap \{r_1,\dots,r_{s-1}\}\cap \Gamma^{-1}(i))$$

and   by Lemma \ref{technicallemma}, 
$$ u_i(R_*\cap \{r_1,\dots,r_{s-1}\}\cap \Gamma^{-1}(i))= u_i( \{r_1,\dots,r_{s-1}\}\cap \Gamma^{-1}(i))$$
then, by transitivity, 
 \begin{equation}\label{eq10-thm}
     u_i(R_*\cap \Gamma^{-1}(i)\setminus\{r_s\})
   = u_i( \{r_1,\dots,r_{s-1}\}\cap \Gamma^{-1}(i)).
 \end{equation}
In order to complete the proof, we only have to show that 
\begin{equation}\label{eqxxx-thm}
 u_i( \{r_1,\dots,r_{s-1}\}\cap \Gamma^{-1}(i))\geq u_i(R_*\cap \{r_1,\dots,r_{s-1}\}\cap \Gamma^{-1}(j) )   
\end{equation}
because if in \eqref{eqxx-thm}, we take $l=j$ and   use  transitivity between \eqref{eq10-thm} , \eqref{eqxxx-thm} and \eqref{eqxx-thm},   the expression \eqref{eq5-thm} is true.

As $\Gamma(r_s)=i$, by the way the $\Gamma$ is defined, then 
$\forall l\in A$
\begin{equation*}
    u_i(\{r_1,\dots, r_{s-1}\}\cap \Gamma^{-1}(i))\geq u_l(\{r_1,\dots, r_{s-1}\}\cap \Gamma^{-1}(l)) ).
\end{equation*}
But, for all $l\in A$,
\begin{eqnarray*}\label{eq8-thm}
    u_l(\{r_1,\dots, r_{s-1}\}\cap \Gamma^{-1}(l) )&=u_l(R_*\cap\{r_1,\dots, r_{s-1}\}\cap \Gamma^{-1}(l))\\
    &=u_i(R_*\cap\{r_1,\dots, r_{s-1}\}\cap \Gamma^{-1}(l)) )
\end{eqnarray*}
and 
$$u_l(R_*\cap\{r_1,\dots, r_{s-1}\}\cap \Gamma^{-1}(l)) )=u_i(R_*\cap\{r_1,\dots, r_{s-1}\}\cap \Gamma^{-1}(l)) ).$$
Using transitivity,  we get \eqref{eqxxx-thm}. 
}
\end{myproof}

\medskip

\noindent{\bf Theorem 8.} {\em
Under a g-binary scenario for goods or chores, the allocation $\Gamma^*$ given by Algorithm~\ref{algo:allocate} modified as previously indicated maximizes the utilitarian social welfare, is  EFX and PO. Moreover, 
its run time is $O(m\log m + mn)$.
}

\medskip

\begin{myproof}
The argument of the proof is similar to that in the proof of the Theorem~\ref{thm:main}.  Let $i,j$ be in $\MA$ such that $$u_i({\Gamma^{*}}^{-1}(i))<u_i({{\Gamma^{*}}^{-1}}(j)).$$ 

\myred{First we consider the case of goods.} By Lemma~\ref{technicallemma},
$$u_{i}({\Gamma^{*}}^{-1}(j)) = u_{i}({\Gamma^{*}}^{-1}(j) \cap C_{ij} )=\{r_{s_1},\dots,r_{s_k}\}.$$
As $\MR$ is sorted in descending order, then for all $r\in {\Gamma^{*}}^{-1}(j)$ 
\myred{with $u_i(r)>0$},
$$ u_i(r_{s_k})\leq u_i(r).$$
By
equation~\eqref{eq-thm-main}
  $$u_{i}({\Gamma^{*}}^{-1}(i)) \geq u_{i}({\Gamma^{*}}^{-1}(j)) - u_{i}(r_{s_k}).$$
  So, for all $r\in {\Gamma^{*}}^{-1}(j)$ \myred{with $u_i(r)>0$},
  $$u_{i}({\Gamma^{*}}^{-1}(i)) \geq u_{i}({\Gamma^{*}}^{-1}(j)) - u_{i}(r).$$
  Thus, $\Gamma^{*}$ is $EFX$.

  \myred{Now,  we consider the case of chores. By Lemma~\ref{technicallemma},
\[
u_{i}({\Gamma^{*}}^{-1}(i)) = u_{i}({\Gamma^{*}}^{-1}(i) \cap \MR_*)=u_i({\Gamma^{*}}^{-1}(i)\cap \MR_*\cap  \{r_{1},\dots,r_{s}\})
\]
where $r_s$  is the last chore in $\MR_*$ allocated to $i$. Note that, $\forall r\in {\Gamma^{*}}^{-1}(i) \setminus \MR_*$, $u_i(r)=0$. As   $\MR$ is sorted in  increasing order
then for all $r\in{\Gamma^{*}}^{-1}(i)$ with $u_i(r)<0$,
$$ u_i(r_{s})\geq u_i(r).$$
Then \begin{equation}\label{eq-tech}
    u_i({\Gamma^{*}}^{-1}(i)) - u_{i}(r)\geq u_i({\Gamma^{*}}^{-1}(i)) - u_{i}(r_s).
\end{equation}

Since  \eqref{eq5-thm} is true for $r_s$,  using Transitivity, from   \eqref{eq-tech} and \eqref{eq5-thm}, we have that $\forall r\in {\Gamma^{*}}^{-1}(i)$ with $u_i(r)<0$, 
$$u_{i}({\Gamma^{*}}^{-1}(i)) - u_{i}(r) \geq u_{i}({\Gamma^{*}}^{-1}(j)).$$

  Thus, ${\Gamma^{*}}$ is $EFX$.
  }
\end{myproof}

\medskip 


\myred{
The following example shows the facts established in Remark~\ref{remark-new-behavior} about the behavior 
of $\Gamma$ and other 
allocations in a g-binary scenario.

\begin{ejem}\label{eje1}\label{exp:properties PO EF1}
Suppose that  $\MR=\{r_1,r_2,r_3,r_4,r_5\}$ and that  each resource  $r_k$ is valued as $p_k$ according to Table~\ref{tab:resource_values}.
\begin{table}[H]
     \caption{ $p_k$ values for each resource.}
     \label{tab:resource_values}
     \centering
     \begin{tabular}{cccccc}
  \hline
        & $r_1$&$r_2$&$r_3$&$r_4$&$r_5$  \\
         \hline
         $p_k$&500&200&50&100&250 \\
         \hline
         \end{tabular}
  \end{table}
  
Let  $\MA=\{1,2,3\}$ be the set of agents;  each agent  $i$ establishes its utility functions $u_i$ over each resource using  Table  \ref{tab:pref_of_agents_U3xR5}.

\begin{table}[H]
     \caption{Utility functions.}
     \label{tab:pref_of_agents_U3xR5}
     \centering
    \begin{tabular}{cccccc}
          \hline
          &$r_1$&$r_2$&$r_3$&$r_4$&$r_5$ \\ \hline
          $u_1$&500&200&50&0&0\\ 
          $u_2$&500&0&50&100&250\\
          $u_3$&500&200&0&100&0\\ \hline
    \end{tabular}
\end{table}

Note that all agents prefer the resource $r_1$ and their utility is 500. The resource $r_2$ is required by agents 1 and 3, and their utility is 200; the agents 1 and 2 give to resource $r_3$ the utility of 50.
Concerning
the resource $r_4$,  agents 2 and 3 give it the utility of 100; whereas resource $r_5$ is only required  by agent 2 with utility 250. If for  $i=1,2,3$, the function  $u_i$ is extended additively over each subset $S\subseteq\MR$, that is,  $\forall i\in \MA$,  $u_i(S)= \sum_{r\in S}u_i(\{r\})$, then each $u_i$ is a g-binary utility function, i.e., it is a g-binary scenario.

In this scenario, we will consider four allocations in order to illustrate their behavior with respect to   properties  EF1, PO, $MSW_u$ and $MSW_{Nash}$.
Let  $F$, $G$,  $\Gamma$  and $J$ be  the allocations defined by Table \ref{tab:def_of_allocations_R5xFJ}, which shows the agent number who receives each resource. 
\begin{table}[H]
     \caption{Allocations.}
     \label{tab:def_of_allocations_R5xFJ}
     \centering
    \begin{tabular}{cccccc}
  \hline
        & $r_1$&$r_2$&$r_3$&$r_4$&$r_5$  \\
         \hline
         $F$&1&3&3&3&2 \\
         $G$&1&3&2&3&2 \\
         $\Gamma$&1&3&2&2&2 \\
         $J$&1&1&2&3&2 \\ \hline
         \end{tabular}
\end{table}
In Table \ref{tab:tabla4}, we show the utility assigned by each agent to its  received bundle, and the social welfare in each allocation.
\begin{table}[H]
    \caption{Utility by received bundle and social welfare.}
     \label{tab:tabla4}
     \centering
    \begin{tabular}{cccccc}
    \hline
         & 1&2&3&$SW_u$&$SW_{Nash}$  \\
         \hline
         $u_i(F^{-1}(i))$&500 &250 &300&1050&37500000 \\
         $u_i(G^{-1}(i))$&500&300&300&1100&45000000 \\
          $u_{i}(\Gamma^{-1}(i))$&500&400&200&1100&40000000 \\
          $u_{i}(J^{-1}(i))$&700&300&100&1100&21000000 \\ \hline
         \end{tabular}
\end{table}

The properties of each allocation are described in Table \ref{tab:tabla5} and the detailed verification of these facts is given below. 
  
\begin{table}[H]
    \caption{Allocation properties under a \myred{g-binary} scenario.}
     \label{tab:tabla5}
     \centering
    \begin{tabular}{ccccc}
    \hline
        & PO&EF1&$MSW_u$&$MSW_{Nash}$  \\
         \hline
         $F$&\ding{53}&\ding{51}&\ding{53}&\ding{53}\\
         $G$&\ding{51}&\ding{51}&\ding{51}&\ding{51}\\
         $\Gamma$&\ding{51}&\ding{51}&\ding{51}&\ding{53}\\
         $J$&\ding{51}&\ding{53}&\ding{51}&\ding{53}\\
         \hline
         \end{tabular}
  \end{table}   

It is important to note that $F$ is an allocation EF1 and it is not in $MSW_u$. $G$ satisfies all the properties.
$\Gamma$ is EF1, PO, it  is in  $MSW_u$ but it is not in $MSW_{Nash}$.
Finally, $J$ is $PO$ and it is in $MSW_u$ but it is neither EF1 nor in $MSW_{Nash}$.

\end{ejem}

}

\begin{myproofof}{Properties in Example~\ref{exp:properties PO EF1}}\label{exam-cont-ejem1}
We give details of the properties fulfilled by the allocations  in Table \ref{tab:tabla5}.

From Theorem \ref{thm:utili-imp-MSWU} and Table \ref{tab:resource_values}, it is easy to see that the maximum utilitarian welfare is reached in 1100. Moreover;  $G$, $\Gamma$ and $J$  maximize $SW_u$. By Theorem~\ref{lema-SWU-impli-OP} and  Table \ref{tab:tabla4}, we have that $G$, $\Gamma$, and $J$ are PO.
On the other hand, from Table \ref{tab:tabla4}, we can observe that agents 1 and 3 in the allocations $F$ and $G$, have the same utility for the received bundle; but, in $G$, agent 2 improves its utility. Then, $F$ is Pareto dominated by $G$. Therefore, $F$ is not PO.

A search determined that the maximum Nash social welfare is reached at 45000000. Then $G$ is a maximum Nash social welfare and, by  Theorem \ref{thm-Caragiannis},  $G$ is EF1. Moreover, allocations  $F$ and $\Gamma$ are EF1. In fact,
 $\Gamma$ is EF$1$ by Theorem~\ref{main-thm-6}.
 
For $ F $,  agents $ 2 $ and $ 3 $ envy  agent $ 1 $, however, the envy disappears when eliminating $ r_1 $.
Finally, $J$ is not EF1, because  agent 3 envies  agent 1, $u_3(J^{-1}(3))<u_3(J^{-1}(1))$ and
$$u_3(J^{-1}(3)=100<200=u_3(J^{-1}(1)\setminus\{r_1\})<500=u_3(J^{-1}(1)\setminus\{r_2\}).$$
\end{myproofof}

\myred{The following example shows that $\Gamma^*$ is not EFX$_0$.}

\begin{ejem}\label{ej-EFX-not-EFXO}
Let's consider the following \myred{g-binary} scenario where $n=3$, $m=8$.
Table~\ref{CE-valuation} shows the utility functions that each agent gives to each resource.
\begin{table}[H]
     \caption{ Utility functions.}
     \label{CE-valuation}
     \centering
    \begin{tabular}{ccccccccc}
  \hline
        & $r_1$&$r_2$&$r_3$&$r_4$&$r_5$&$r_6$&$r_7$&$r_8$  \\
         \hline
         $u_1$&20&0&10&2&0&0&3&1 \\
         $u_2$&20&0&10&2&11&19&0&1 \\
         $u_3$&20&9&0&2&0&19&3&1 \\ \hline
         \end{tabular}
\end{table}

Now, using the modified Algorithm~\ref{algo:allocate}, we get the following allocation $\Gamma^*$.
\begin{table}[H]
     \caption{Allocation $\Gamma^*$}
     \label{tab:def_of_allocations_R8}
     \centering
    \begin{tabular}{ccccccccc}
  \hline
        & $r_1$&$r_2$&$r_3$&$r_4$&$r_5$&$r_6$&$r_7$&$r_8$  \\
         \hline
         $\Gamma^*$&1&3&1&3&2&2&3&3 \\ \hline
         \end{tabular}
\end{table}
Then, in Table \ref{CE-allocation_1} we show the utility assigned by each agent to its received bundle, and the social welfare in each allocation.
\begin{table}[H]
     \caption{Utility by received bundle and social welfare.}
     \label{CE-allocation_1}
     \centering
    \begin{tabular}{ccccccccc}
  \hline
        & $1$&$2$&$3$&$SW_{U}$&$SW_{Nash}$  \\
         \hline
         $u_{i}({\Gamma^*}^{-1}(i))$&30&30&15&75&13500 \\ \hline
         \end{tabular}
\end{table}
Notice that agent $3$ envies agent $2$ because 
\[u_{3}(\{r_2,r_5,r_6\})= 15\]
\[u_{3}(\{r_5,r_6\})=19\]
which means that 
\[u_{3}(\{r_2,r_5,r_6\}) < u_{3}(\{r_5,r_6\}).\]
Since $\Gamma^*$ was obtained using the modified Algorithm~\ref{algo:allocate}, we obtain that $\Gamma^*$ is $EFX$.
Now we check that $\Gamma^*$ does not satisfy $EFX_{0}$. Observe that
\[u_{3}(\{r_5,r_6\} \backslash \{r_5\})=19 \ \ \text{given that } \ \ u_3(\{r_5\})=0\]
so, it follows that
\begin{equation}\label{CE-last}
    u_{3}(\{r_2,r_5,r_6\})= 15 < 19=u_{3}(\{r_5,r_6\} \backslash \{r_5\})
\end{equation}

Thus,  property $EFX_{0}$ is not satisfied. 
\end{ejem}

\bibliography{citas}

\end{document}